\newif\iflatexml\latexmlfalse
\numberwithin{equation}{section}
\newcommand{\A}{\mathcal{A} }
\newcommand{\C}{\mathcal{C} }
\newcommand{\V}{\mathcal{V} }
\newcommand{\Si}{\mathcal{S} }
\newcommand{\dss}{\mathcal{DSS} }
\newcommand{\B}{\mathcal{B} }
\newcommand{\Iss}{\texttt{Issuer} }
\newcommand{\Sub}{\texttt{Subject} }
\newcommand{\Head}{\texttt{Header} }
\newcommand{\Next}{\texttt{Next\_Subject} }
\newcommand{\BCS}{\texttt{Backward\_Cross\_Signature} }
\newcommand{\FCS}{\texttt{Forward\_Cross\_Signature} }
\newcommand{\T}{\mathsf{True} }
\newcommand{\F}{\mathsf{False} }
\newcommand{\PK}{\mathsf{PK}}
\newcommand{\SK}{\mathsf{SK}}
\newcommand{\Ver}{\mathsf{Ver}}
\newcommand{\Key}{\mathsf{KeyGen}}
\newcommand{\Sign}{\mathsf{Sign}}
\newcommand{\Fb}{ \mathbb{F}_{2}}
\newtheorem{theorem}{Theorem}[section]
\newtheorem{definition}{Definition}[section]
\newtheorem{corollary}{Corollary}[section]
\newtheorem{remark}{Remark}[section]
\begin{document}

\title{On the security of the Blockchain Bix Protocol and Certificates}


  
\author{Riccardo Longo, Federico Pintore, Giancarlo Rinaldo, Massimiliano Sala\footnote{\textbf{Mail}: riccardo.longo@unitn.it, federico.pintore@unitn.it, giancarlo.rinaldo@unitn.it, maxsalacodes@gmail.com}}
\affil{University of Trento}

\date{\today}

\maketitle

\begin{abstract}
The BIX protocol is a blockchain-based protocol that allows distribution of certificates linking a subject with his public key, hence providing a service similar to that of a PKI but without the need of a CA.
In this paper we analyze the security of the BIX protocol in a formal way, in four steps.
First, we identify formal security assumptions which are well-suited to this protocol. Second, we present some attack scenarios against the BIX protocol. Third, we provide a formal security proof that some of these attacks  are not feasible under our previously established assumptions. Finally, we show how another attack  may be carried on.%
\end{abstract}%

\section*{Introduction}

Blockchain is an emerging technology that is gaining widespread adoption to solve a myriad of problems where decentralized network computations can substitute a centralized approach.
Indeed, centralized computations, albeit efficient, are possible only if there is a Trusted Third Party (TTP) that everybody trusts and nowadays this is sometimes felt as a limitation.

The general idea behind blockchain technology is that blocks containing information are created by nodes in the network, and that these blocks are both public and cryptographically linked, so that an attacker should be unable to modify them without the users noting the tampering.
Also, the information contained in any block comes from the users and any user signs his own information cryptographically. Some examples of blockchain applications can be found in \cite{storj}, \cite{PoE}, \cite{ethereum}.

A very sensitive security aspect which is usually kept centralized is the issuing of digital certificates, which form the core of a Public Key Environment (PKI).
A certificate contains at least a cryptographic public key and it is digitally signed by a TTP.
An example is given by X.509 certificates \cite{X.509}, mostly containing  RSA public keys, which are widely used in the Internet for establishing secure transactions (e.g., an e-payment with an e-commerce site like Amazon). 
Since every user of a PKI must trust the Certification Authority (CA), which acts as a TTP, the identity of a web site is checked by verifying the CA's signature via the CA's public key.
We note that the identity checking is often performed via a hierarchy of CA's.  

In \cite{BIX} Muftic presents a system ("BIX") for the issuing of certificates that is based on blockchain technology and thus avoids the need of a TTP. 
We find BIX of interest and our paper aims at analyzing BIX from the point of view of security, as follows:
\begin{itemize}
    \item In Section~\ref{prel} we summarize some known preliminaries, especially on digital signatures and formal proofs of security, with focus on some notions of their security that we need for our results in Section~\ref{ch-len} e \ref{cert-tamp};
    \item In Section~\ref{bix} we provide a sketch of Muftic's scheme, highlighting its characteristics that are instrumental in its security.
    \item In Section~\ref{ch-len} we present our first security proof. Here we suppose that an attacker tries to trick the protocol by attaching a forged block to a preexisting chain, without interacting properly with the last user.
\item In Section~\ref{cert-tamp} we present our second security proof. Here we suppose that an attacker tries to modify an existing blockchain.
\item In Section~\ref{mid-alt} we show an attack on the BIX protocol which can alter an existing blockchain when
      two  members of the BIX community collude.
\item Finally, we draw some conclusions and suggest further work, possibly improving the BIX protocol.
\end{itemize}

\section{Preliminaries}
\label{prel}

\subsection{Formal Proofs of Security}
\label{f-proof}
In cryptography the security of a scheme relies on the hardness of a particular mathematical problem.
So, in a formal proof of security the goal is to model the possible attacks on the scheme and prove that a successful breach implies the solution of a hard, well-known mathematical problem.
Some security parameters may be chosen in such a way that the problem guaranteeing the security becomes almost impossible to solve (in a reasonable time), and thus the scheme becomes impenetrable.

In a formal proof of security of an encryption scheme there are two parties involved: a \emph{Challenger} $\mathcal{C}$ that runs the algorithms of the scheme and an \emph{Adversary} $\mathcal{A}$ that tries to break the scheme making queries to $\mathcal{C}$.
In a \emph{query} to $\mathcal{C}$, depending on the security model, $\mathcal{A}$ may request private keys, the encryption of specific plaintexts, the decryption of specific ciphertexts and so on.
The goal of $\mathcal{A}$ also depends on the security model, for example it may be to recover a key, to forge a digital signature, to invert a hash function.

The scheme is supposed secure if an \emph{Assumption} holds on the related mathematical problem.
Generally an \emph{Assumption} is that there is no \emph{polynomial-time} algorithm that solves a problem $\mathcal{P}$ with \emph{non-negligible} probability.
For example see the problem in the following subsection on hash functions.
The security proofs follows the following general path. Suppose there is an Adversary $\mathcal{A}$ that breaks the scheme with non-negligible probability $p_1$. A \emph{Simulator} $\mathcal{S}$ is built such that if $\mathcal{A}$ breaks the scheme then $\mathcal{S}$ solves $\mathcal{P}$.
So, given an instance of $\mathcal{P}$, $\mathcal{S}$ runs a challenger $\mathcal{C}$ that interacts with $\mathcal{A}$, simulating the scheme correctly with non-negligible probability $p_2$.
Thus $\mathcal{S}$ solves $\mathcal{P}$ with non-negligible probability, which is usually $p_1 p_2$, contradicting the Assumption.

To summarize, a formal proof of security is a \emph{reduction} from the problem \emph{attack the scheme} to the problem \emph{solve $\mathcal{P}$}.
Typically $\mathcal{P}$ is a well-studied problem so the assumption on its insolvability is accepted by the academic community.

\subsection{Hash Functions}

Commonly, the messages to be signed, seen as binary strings, are compressed in fixed-length binary string via a cryptographic hash function. A hash function $H$ can be idealized as function whose set of inputs is the set of all possible binary strings, denoted by $(\mathbb{F}_2)^*$, while its set of possible outputs is the set of all binary strings of given length (called \emph{digest}). Real-life hash functions have a \emph{finite}  input set, but  it is so large that can be thought of as infinite.
 For example, the hash functions used in the Bitcoin protocol are SHA256 (\cite{sha}), enjoying a digest length of 256 bits and with input string up to $2^{64}$-bit long, and RIPEMD-160, with 160-bit digests. 

Cryptographic hash functions can need several security assumptions, however for the goals of this paper the
four following definitions are sufficient.
\begin{definition}[Collision Problem for a Class of Inputs]\label{collpb}
Let $r \geq 1$. Let $h: (\mathbb{F}_2)^* \rightarrow (\mathbb{F})_2^r$ be a \emph{hash function}, and $L \subseteq (\mathbb{F})_2^l$ be a class of inputs. 
The \emph{collision problem} for $h$ and $L$ consists in finding two different inputs $m_1, m_2 \in L$, with $m_1 \neq m_2$, such that $h(m_1) = h(m_2)$.
\end{definition}

\begin{definition}[Collision Resistance of Hash Functions] \label{collres}
    Let $h$ be a \emph{hash function}.
    We say that $h$ is  \emph{collision resistant} for a class of inputs $L$ if there is no polynomial-time algorithm $\B(h, L) \rightarrow \{m_1, m_2\}$ that solves the Collision Problem \ref{collpb} for $h$ and $L$ with non-negligible probability.
    The complexity parameter is $r$.
 \end{definition}

\subsection{Digital Signatures and ECDSA}

To validate an action having a legal value we are usually requested to produce our handwritten signature.
Assuming that nobody is able to forge a signature while anybody can verify its validity, the signature is used to certify the correspondence of the identities of who is taking the action and in the name of whom the action is being taken.
In the digital word, handwritten signatures are substituted by \textit{digital signatures} that satisfy the same conditions seen above.
With the name \textit{Digital Signature Scheme} we refer to any asymmetric cryptographic scheme for producing and verifying digital signatures. For us, a Digital Signature Scheme  consists of three algorithms:
\begin{itemize}
\item \emph{Key Generation} - $\Key(\kappa) \rightarrow (\SK, \PK)$: given a security parameter $\kappa$ generates a public key $\PK$, that is published, and a secret key $\SK$.
\item \emph{Signing} - $\Sign(m, \SK) \rightarrow s$: given a message $m$ and the secret key $\SK$, computes a digital signature $s$ of $m$.
\item \emph{Verifying} - $\Ver(m, s, \PK) \rightarrow r$: given a message $m$, a signature $s$ and the public key $\PK$, it outputs the result $r \in \{\T, \F\}$ that says whether or not $s$ is a valid signature of $m$ computed by the secret key corresponding to $\PK$.
\end{itemize}
The previous algorithms may require a source of random bits to operate securely.

\indent
We will measure the security of a Digital Signature Scheme  by the difficulty of forging a signature in the following scheme (which results in an existential forgery):

\begin{definition}[Digital Signature Security Game]\label{digsiggame}
Let $\dss$ be a Digital Signature Scheme. Its security game for an adversary $\A$, proceeds as follows:
\begin{description}
\item [Setup.]
The challenger runs the $\Key$ algorithm, and gives to the adversary the public key $\PK$.
\item [Query.]
The adversary issues signature queries for some messages $m_i$, the challenger answers giving $s_i = \Sign(m_i, \SK)$.
\item [Challenge.]
The adversary is able to identify a message $m$ such that $m \neq m_i$~$\forall i$, and tries to compute $s$ such that $\Ver(m, s, \PK) = \T$. If $\A$ manages to do so, she wins.
\end{description}
\end{definition}

\begin{definition}[Security of a Digital Signature Scheme]\label{digsigsec}
A Digital Signature Scheme $\dss$ is said \emph{secure} if there is no polynomial-time algorithm $\A$ (w.r.t. $\kappa$) that wins the Digital Signature Security Game \ref{digsiggame} with non-negligible probability.
  \end{definition}

Ideally, a Digital Signature Scheme is designed in such a way that forging a signature in the scheme is equivalent to solving a hard mathematical problem. Although this equivalence is usually assumed but not proved, we say that the Digital Signature Scheme is based on that mathematical problem. Several Digital Signatures Schemes (e.g. \cite{CGC-cry-art-elgamal}), are based on the discrete logarithm problem (although other approaches exist, see e.g. \cite{CGC-cry-art-Rab79}, \cite{CGC-cry-art-RSA78}). Among them, the Elliptic Curve Digital Signature Algorithm (ECDSA), which uses elliptic curves, is widespread. 
To the sake of easy reference we recall briefly how ECDSA is designed (\cite{ECDSA}).\\
\newline
\emph{Domain Parameters}. An elliptic curve $\mathbb{E}$ defined over a finite field $\mathbb{F}_q$ is fixed together with a rational point $P \in \mathbb{E}(\mathbb{F}_q)$ having order $n$ and a cryptographic hash function $h$ \cite{CGC2-cry-art-Preneel99}. Let $\mathcal{O}$ be the point at infinity of $\mathbb{E}$.\\
\newline
\emph{Key Generation}. Any user A selects a random integer $d$ in the interval $[1,n-1]$. Then his \textit{public key} is $Q=dP$ while $d$ is his \textit{ private key}.\\
\newline
\emph{Signing}. To sign a message $m$ (a binary string), a user A with key pair $(Q,d)$ selects a random integer $k$ in the interval $[1,n-1]$ and accepts it if it is relatively prime to $n$. Then A computes $(x_1,y_1)=kP$ and converts $x_1$ in an integer $\bar{x}_1$. In the unlikely event that $r={\bar{x}}_1 \pmod n$ is 0, then he has to extract another random integer $k$. Otherwise, he proceeds to hash the message and to transform the digest $h(m)$ in a non-negative integer $e$ using the standard conversion of the binary representation. The A computes the value $s=k^{-1}(e+dr)~\pmod n$. 
In the unlikely event that $s=0$, the value of $k$ must be randomly selected again, else the pair $(r,s)$ is output as the A's signature for the message $m$.\\
\newline
\emph{Verifying}. To verify the signature $(r,s)$ of the public key $Q$ for the message $m$ a user B proceeds as follows: she checks if $r$ and $s$ are integers contained in the interval $[1,n-1]$; she computes the hash of the message $h(m)$ and converts it to a non-negative integer $e$. She then computes the point of the elliptic curve $(x,y)=(es^{-1}~\pmod n)P+(rs^{-1} \pmod n)Q$. If $Q=\mathcal{O}$, then B rejects the signature; otherwise she converts $x$ into an integer $\bar{x}$ and accepts the signature if and only if
$\quad(\bar{x} \pmod n)\;=\; r$.\\
We observe that if the signature $(r,s)$ of the message $m$ was actually produced by the private key $d$, then $s=k^{-1}(e+dr) \pmod n$ and so:
$$k=s^{-1}(e+dr)=es^{-1}+drs^{-1} \pmod n$$
and 
$$ kP \; =\; (x_1,y_1) \; = \; \left(es^{-1}+drs^{-1} \pmod n\right) P \,.$$
Clearly, if an attacker is able to solve the DLOG on $\mathbb{E}$, then she can break the corresponding ECDSA.
The converse is much less obvious. In \cite{signatures-paillier},
the authors provide convincing evidence that the unforgeability of several discrete-log based signatures cannot be equivalent to the discrete log problem in the standard model. Their impossibility proofs apply to many discrete-log -based signatures like ElGamal signatures and their extensions, DSA, ECDSA and KCDSA as well as standard generalizations of these. However, their work does not explicitly lead to actual attacks. Assuming that breaking the DLOG is the most efficient attack on ECDSA, then nowadays recommended key lengths start from $160$ bits, with $256$ being that of the signature employed in the Bitcoin blockchain.

\section{A description of BIX certificates}
\label{bix}
In this section we describe the BIX certificates,  the structure containing them, called the BIX Certification Ledger (BCL), and the BIX-protocol.  The BCL collects all the BIX certificates filling a double-linked list, in which every certificate is linked to the previous and the next. To simplify our notation we define the BCL as a "chain of certificates", $\mathrm{CC}$,  of $n$ certificates, that we may consider as a sequence:
\[
\mathrm{CC}:c_0,\ldots,c_{n-1}.
\]

\begin{table}[!htb]
\centering
\begin{tabular}{|l|l|l|}
\cline{2-2}
\multicolumn{1}{l|}{}             & \multicolumn{1}{c|}{  Header ($H_i$)}         &       \multicolumn{1}{l}{}      \\
\cline{2-2}
\multicolumn{1}{l|}{}             &  Sequence number          &       \multicolumn{1}{l}{}      \\
\multicolumn{1}{l|}{}             &  Version, Date.           &       \multicolumn{1}{l}{}      \\
\hline
   \multicolumn{1}{|c|}{Issuer ($S_{i-1}$)}           &      \multicolumn{1}{|c|}{Subject  ($S_i$)}        & \multicolumn{1}{|c|}{Next Subject ($S_{i+1}$)}        \\
\hline
    Bix ID of $S_{i-1}$           &      Bix ID of $S_i$        & Bix ID of $S_{i+1}$        \\
    Public key (PK$_{i-1}$)          & Public key (PK$_i$)         & Public key (PK$_{i+1}$) \\
\hline
   Issuer Signature           &      Subject Signature          & Next Subject Signature         \\
\hline
\hline
\multicolumn{2}{|c|}{Backward cross-signature} &   \\ 
\cline{1-2}

\multicolumn{2}{|l|}{Signature of $(H_i||h(S_{i-1})||h(S_i))$ by SK$_{i-1}$} &   \\ 
\multicolumn{2}{|l|}{Signature of $(H_i||h(S_{i-1})||h(S_i))$ by SK$_{i}$} &   \\ 
\hline
 & \multicolumn{2}{|c|}{Forward cross-signature}   \\ 
 \cline{2-3}

 & \multicolumn{2}{|l|}{Signature of $(H_i||h(S_{i})||h(S_{i+1}))$ by SK$_{i}$} \\ 
 &\multicolumn{2}{|l|}{Signature of $(H_i||h(S_{i})||h(S_{i+1}))$ by SK$_{i+1}$}   \\ 
\hline

\end{tabular}
\end{table}



\begin{remark}\label{rem:chain}
The owner of the certificate $c_i$ has a double role:
\begin{description}
\item[user] the owner certificates his identity by $c_i$;
\item[issuer] the owner provides the certificate $c_{i+1}$ to the next user.
\end{description}
In this way there is no need of a CA (Certification Authority).
\end{remark}

Let $\lambda(\mathrm{CC})=n$, that is, $\lambda$ is a function returning the length of a chain.
We will have $n$ users, each having a pair private-key/public key, that we denote with $\SK_i$ and $\PK_i$.
Certificate $c_0$ is called the {\texttt{root certificate}} and certificate $c_{n-1}$ is called the {\texttt{tail certificate}}.\\
In this paper, a certificate $c_i$ for $i=1,\ldots, n-2$ is defined by the following fields (and subfields) (which are necessary for our proofs of security), while the complete list can be found in \cite{BIX}. Root and tail certificates are described later on.
\begin{description}
 \item[Header ($H_i$)] 
  \ \\
 \textbf{Sequence number} $i$, the identification number of the certificate that is also the position with respect to certificates of other BIX members. 
%
\item[Subject ($S_i$)] 
 The {\Sub} contains two subfields that identify the $i$-th user~($S_i$).
\begin{description}
 \item[Subject BIX ID] This is the unique global identifier of the user who owns the certificate.\\
 In particular, all BIX ID's contained in the {\texttt{Subject}} fields of a valid chain are distinct. 
 \item[Public key] The cryptographic public key of the owner of the certificate~$\PK_i$.
 \end{description}
\item[Subject signature] It contains the signature over the  Subject attributes via the private key $\SK_i$ associated to $\PK_i$.

\item[Issuer ($S_{i-1}$)]
The \Iss field enjoys the same attribute structure of the \Sub field, but it identifies the
BIX member who certified $S_i$, i.e., it contains the \Sub attributes of $c_{i-1}$, which identifies $S_{i-1}$
(the previous member in the BCI).

\item[Issuer signature] This field contains the signature over the Issuer attributes created by the Issuer, that is, performed via the private key $\SK_{i-1}$ associated to $\PK_{i-1}$.

\item[Backward cross-signature] The {\BCS}  contains two signatures, one created by the Issuer $S_{i-1}$ and the other created by the Subject $S_i$, over the same message: the concatenation of the Header $H_i$,  the hash of the Issuer $h(S_{i-1})$ and the hash of
the Subject $h(S_i)$.\\
Note that this field guarantees validity of the Header and binding between the Subject and the Issuer.

\item[Next Subject ($S_{i+1}$)] The \Next field enjoys the same attribute structure of the \Sub field, but it identifies the
BIX member who is certified by $S_i$, i.e., it contains the \Sub attributes of $c_{i+1}$, which identifies $S_{i+1}$
(the next member in the BCI).

\item[Next Subject signature] This is the same field as Subject signature, except it is created by the
Issuer over the Next Subject data, that is, performed via the private key $\SK_{i+1}$ associated to $\PK_{i+1}$.

\item[Forward cross-signature] The {\FCS} contains two signatures, one created by the Subject $S_i$ and
the other created by the Next Subject $S_{i+1}$, over the same message: the concatenation of the Header $H_i$, the hash of the Subject $h(S_i)$ and the hash of the Next Subject $h(S_{i+1})$.\\
Note that this field guarantees binding between the current user
acting as an issuer and the next user (to whom the next certificate $c_{i+1}$ is issued).
\end{description}

We now describe the special certificates:
\begin{itemize}
\item The certificate $c_0$, called \textit{root certificate}, has the same structure of a standard certificate, but the \Iss field and the \Sub  field contain the same data. Indeed, the root user $S_0$ is not a normal user but rather an entity that initiates  the specific BCL. 
\item Also the certificate $c_{n-1}$ is special. Although it has the same structure of a standard certificate, some fields are not populated because the next user is still unknown: \Next, the \Next signature, the \FCS.\\ 
The last user that owns the last certificate, $c_{n-1}$ will then become the issuer for the next certificate (see \ref{rem:chain}).
\end{itemize}

We describe briefly how the protocol works. 
\begin{enumerate}
\item Certification request.
\begin{enumerate}
\item A new user, who will be $S_n$, asks for a certificate registers himself to the system (BCI). The system provides him the BIX-ID. Then, the user creates his private and public key, creates and signs his \Sub and sends both as a request to the system.
Since he does not know who is the last user, he sends his request to every user of the BCL. 
\item The owner of the tail certificate $c_{n-1}$, namely $S_{n-1}$, processes this request.
That is, she fills the \Iss field and the Issuer Signature field of $c_n$, while creating also an intermediate version of the \BCS field with her private key $\SK_{n-1}$.
That is, she signs $\big(H_n||h(S_{n-1})||h(S_{n})\big)$  (where "$||$" is concatenation of strings) and puts it into $c_n$.
\item At the same time, she updates her BIX certificate $c_{n-1}$ filling the \Next field and the Next Subject Signature field using the data of user $S_n$.  Moreover, she creates an intermediate version of the \FCS field by signing it with her private key . That is, she signs $\big(H_{n-1}||h(S_{n-1})||h(S_{n})\big)$ with $\SK_{n-1}$.
\item Now $S_{n-1}$ sends three certificates, $c_0$, $c_{n-1}$ and $c_n$ to the new user $S_n$ through the system (BCI).
Observe that the two certificates $c_{n-1}$ and $c_{n}$ are still incomplete and that $c_n$ will be the new tail certificate.
\item User $S_n$ receives these certificates, completes the counter-signature process by performing two digital signatures and adding them, respectively, to the \FCS of $c_{n-1}$ and the\\ \BCS  of $c_n$.
\item User $S_n$ requests the chain $\mathrm{CC}$ to the system and checks its integrity, by either traversing it forwards from $c_0$ to $c_{n-1}$  or backwards from $n-1$ to $0$ using $c_{n-1}$. If $\mathrm{CC}$ passes the integrity checks, he broadcasts $c_{n-1}$ and $c_{n}$.\\
At the same time, he stores $\mathrm{CC}$ locally for future use. 
\end{enumerate}
\item Certificate exchange.
\begin{enumerate}
\item When a user, $S_i$, wants to perform a secure communication/transaction with a  second user, $S_j$, and already both have their BIX-certificates, then he sends his certificate  $c_i$ to $S_j$ and requests her certificate $c_j$. Each user needs to verify the certificate of the other. We focus on $S_i$ since the two procedures are completely symmetric. \\
User $S_i$ checks  certificate $c_j$ in two steps:
\item he verifies the Subject signature, the Issuer signature and also the \BCS of her certificate $c_j$. To this goal, it is sufficient to have $\PK_{j-1}$, the public key of the issuer of $c_j$,  and $\PK_j$, the public key of user $S_j$ , which are both available in the certificate $c_j$ itself.
\item $S_i$ verifies that $c_j$ is  in $\mathrm{CC}$, which is surely available in his local storage in the case $i>j$ (see point 1.f). If $j>i$ and $S_i$ does not have a local version of $\mathrm{CC}$ containing the element $c_j$, he will update $\mathrm{CC}$ in his local storage by a strategy similar to that explained in point 1.f, until he reaches the $j$-th element.
\end{enumerate}
\end{enumerate}

\section{Chain Lengthening}
\label{ch-len}
  The first attack scenario that we consider supposes that an attacker tries to attach her certificate to a preexisting certificate chain without interacting properly with the last user of the chain. More precisely, the attacker $\A$ should not interact with the subject of the last certificate in the chain according to the BIX protocol.

  \begin{definition}[Static Chain Lengthening (SCL) Game]\label{sclgame}
    In this game an adversary $\A$ aims to add a certificate to the tail of a certificate chain $\mathrm{CC}$.\\
    It proceeds as follows:
    \begin{itemize}
        \item The challenger $\C$  builds a certificate chain $\mathrm{CC}$ according to the BIX protocol with root certificate $c_0$, using a hash function $h$ and a digital signature scheme $\dss$.
        \item $\C$ passes to $\A$ the complete chain $\mathrm{CC}$ together with $h$ and $\dss$.
        \item $\C$ builds an  honest verifier $\V$ that given a certificate $c^*$ and a certificate chain $\mathrm{CC}^*$ outputs $\T$ if the root certificate of $\mathrm{CC}^*$ is $c_0$ and $c^*$ is a valid certificate of $\mathrm{CC}^*$, $\F$ otherwise.
        \item $\A$ tries to build a \emph{forged} certificate chain $\mathrm{CC}'$, $\lambda(\mathrm{CC}')=n$, such that:
        \begin{itemize}
          \item $\mathrm{CC}'$ truncated before the last certificate $c'_{n}$ is identical to $\mathrm{CC}$ if the \Next and \FCS fields of the second-to-last certificate of $\mathrm{CC}'$ are not considered (i.e. we obtain $\mathrm{CC}'$ by adding a certificate  to $\mathrm{CC}$ and completing $c_{n-1}$ accordingly);
          \item user $S_{n-1}$ did not take part in the creation of $c_{n}'$ and so in particular he did not
                perform the \FCS of $c_{n-1}$ and the \BCS of $c_{n}'$;
          \item $\V(c', \mathrm{CC}') = \T $ where $c'$ is the last certificate of $\mathrm{CC}'$.
        \end{itemize}
    \end{itemize}
    $\A$ wins the \emph{SCL} game if she successfully lengthens $\mathrm{CC}$, i.e. if she buils a $\mathrm{CC}'$ that satisfies these last three points.
  \end{definition}

  \begin{definition}[Security against SCL]
    The BIX protocol is said {\texttt{secure against static chain lengthening}} if there is no adversary $\A$ that in polynomial time wins the SCL game \ref{sclgame} with non-negligible probability.
  \end{definition}

  \begin{theorem}\label{sclthm}
    Let $\A$ be an adversary that wins the SCL game \ref{sclgame} with probability $\epsilon$, then a simulator $\Si$ might be built that, with probability at least $\epsilon$, either solves the Collision Problem \ref{collpb}, with $L$ the set of all possible \Sub fields, or wins the Digital Signature Security game \ref{digsiggame}.
  \end{theorem}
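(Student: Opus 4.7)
I would build a simulator $\Si$ that interacts with a single instance of either the Digital Signature Security Game \ref{digsiggame} or the Collision Problem \ref{collpb}, and extracts a solution from $\A$'s successful output. $\Si$ first asks the $\dss$ challenger for the challenge public key $\PK^{\star}$ and plants it in position $n-1$ of the chain, by setting $\PK_{n-1}:=\PK^{\star}$. All other key pairs $(\SK_i,\PK_i)$, $i \neq n-1$, are generated honestly by $\Si$. The simulator then constructs the chain $\mathrm{CC}$ step by step according to the BIX protocol, using its own secret keys directly and querying the $\dss$ signing oracle whenever the protocol requires a signature under $\SK_{n-1}$. The complete chain $\mathrm{CC}$, together with $h$ and $\dss$, is forwarded to $\A$.

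\textbf{Key observation.} Since $c_{n-1}$ is the tail of the honest chain, its \Next, \FCS and Next Subject Signature fields are still empty, so the only messages signed under $\SK_{n-1}$ during the simulation are: the Subject attributes of $S_{n-1}$ (for the Subject signature of $c_{n-1}$, and possibly for the Next Subject Signature of $c_{n-2}$); the string $m_{B}=H_{n-1}\,\|\,h(S_{n-2})\,\|\,h(S_{n-1})$, for $S_{n-1}$'s half of the \BCS of $c_{n-1}$; and the string $m_{F}=H_{n-2}\,\|\,h(S_{n-2})\,\|\,h(S_{n-1})$, for $S_{n-1}$'s half of the \FCS of $c_{n-2}$. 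When $\A$ returns a winning $\mathrm{CC}'$, $\Si$ reads off the half of the \FCS of the updated second-to-last certificate that is now signed under $\SK_{n-1}$; call this signature $\sigma^{\star}$ and its underlying message $m^{\star}=H_{n-1}\,\|\,h(S_{n-1})\,\|\,h(S_n')$.

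\textbf{Case analysis.} If $m^{\star}$ differs from every message $\Si$ submitted to the signing oracle, then $(m^{\star},\sigma^{\star})$ is a valid existential forgery under $\PK^{\star}$, and $\Si$ wins the Digital Signature Security Game. Otherwise $m^{\star}$ must coincide with one of the three messages above. The syntactic shape of the Subject attributes (an identifier-plus-key record) is incompatible with the triple format $H\,\|\,h\,\|\,h$, ruling that case out; and the fact that the Header encodes the Sequence Number forces $H_{n-1}\neq H_{n-2}$, ruling out $m_{F}$. Hence $m^{\star}=m_{B}$, which, under the natural assumption of fixed-length Headers and fixed-length hashes, is equivalent to the joint equalities $h(S_{n-1})=h(S_{n-2})$ and $h(S_n')=h(S_{n-1})$. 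Since valid chains have pairwise distinct BIX IDs, both $(S_{n-2},S_{n-1})$ and $(S_{n-1},S_n')$ are pairs of distinct elements of $L$, so either gives a solution to the Collision Problem for $h$ restricted to $L$. In both branches $\Si$ succeeds whenever $\A$ does, so its winning probability is at least $\epsilon$.

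\textbf{Main obstacle.} The delicate step is the exhaustive enumeration of every string ever signed under $\SK_{n-1}$ during the honest simulation: any missed signature (for instance, one produced while $S_{n-1}$ is being added as Next Subject of $c_{n-2}$) would open a case not covered by the collision argument. One must also justify precisely, via the encoding of Headers and the fixed length of hashes, that $m^{\star}$ cannot accidentally parse as a queried message of a different shape, as otherwise the reduction would silently require additional assumptions on $h$.
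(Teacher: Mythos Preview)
Your reduction is correct and follows the same architecture as the paper: plant the challenge key $\PK^{\star}$ as $\PK_{n-1}$ in the tail certificate, obtain the handful of signatures needed to complete $\mathrm{CC}$ via signing-oracle queries, and then argue that the extra signature appearing in $\A$'s output is either a fresh forgery or forces a hash collision in $L$.

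The one point where you diverge from the paper is the choice of extraction point. You read off $S_{n-1}$'s half of the \FCS of the (now completed) $c_{n-1}$, with underlying message $m^{\star}=H_{n-1}\,\|\,h(S_{n-1})\,\|\,h(S_n')$. The paper instead reads $S_{n-1}$'s half of the \BCS of the new certificate $c_n'$, with message $l'=H_n'\,\|\,h(S_{n-1}')\,\|\,h(S_n')$. Both are legitimate, since the SCL game explicitly stipulates that $S_{n-1}$ produced neither of these signatures. Your choice makes the collision case slightly richer (you obtain two collisions, $h(S_{n-2})=h(S_{n-1})$ and $h(S_{n-1})=h(S_n')$, from the single equality $m^{\star}=m_B$), whereas the paper's choice yields only $h(S_n')=h(S_{n-1}^*)$ but has the cosmetic advantage that the forged header $H_n'$ carries a fresh sequence number, so no Header comparison is even needed in practice. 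Either route gives the claimed bound $\epsilon$, and the case analyses are of comparable length and delicacy.
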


  \begin{proof}
    Let $\dss$ be the digital signature scheme and $h$ the hash function used in the BIX protocol, and $L \subseteq \Fb^l$ be the class of all possible \Sub fields.
    We will build a simulator $\Si$ that simultaneously plays the Digital Signature Security (DSS) game \ref{digsiggame} and tries to solve an instance of the Collision Problem \ref{collpb} for $L$.
    It does so by simulating an instance of the SCL game \ref{sclgame} and exploiting $\A$.
    We will prove that if $\A$ wins the SCL game then either $\Si$ finds a solution for the Collision Problem or  $\Si$ wins the DSS game.

    $\Si$ starts with taking as input an instance $(h, L)$ of \ref{collpb} and a public key $\PK^*$ given
    by the $\dss$ challenger (i.e.,  the output of the first phase of \ref{digsiggame} for the scheme $\dss$).
    $\Si$ then proceeds to build a certificate chain $\mathrm{CC}^*$ following the BIX protocol.
    $\Si$ builds all but the last certificate normally, running the $\Key$ algorithm of the $\dss$ to choose public keys for the \Sub fields, so the corresponding secret keys are available to sign these certificates properly.
    Then let $n=\lambda(\mathrm{CC})^*\geq 2$ (i.e. the number of certificates contained in $\mathrm{CC}^*$), $c^*_0$ its root certificate and $c^*_{n-1}$ the last one.
    $\Si$ sets the \Sub of $c_{n-1}^*$, that we will denote by $S^*_{n-1}$, such that its public key is $\PK^*$, then it queries the challenger of the DSS game to obtain three valid signatures, respectively, on:
    \begin{itemize}
        \item the hash $h(S_{n-1}^*)$ of this subject,
        \item $\big(H_{n-1}^* || h(S_{n-2}^*) || h(S_{n-1}^*)\big)$ for the \BCS of $c_{n-1}^*$,
        \item $\big(H_{n-2}^* || h(S_{n-2}^*) || h(S_{n-1}^*)\big)$ for the \FCS of $c_{n-2}^*$,
    \end{itemize}
    where $H_{n-2}^*$ is the \Head of $c_{n - 2}^*$, $H_{n-1}^*$ is the \Head of $c_{n - 1}^*$, and $h(S_{n-2}^*)$ is the hash of the \Iss of $c_{n - 1}^*$, that is the \Sub of $c^*_{n-2}$.
    In this way $\Si$ completes a certificate chain $CC^*$ of length $n$, that it passes to $\A$.

    $\A$ responds with a counterfeit chain $\mathrm{CC}'$ of length $\lambda(\mathrm{CC})=n+1$.
    If $\mathrm{CC}'$ is not valid (the chains $\mathrm{CC}'$ and $\mathrm{CC}^*$ do not correspond up to the $(n-1)$-th certificate, or an integrity check fails) then $\Si$ discards this answer and gives up ($\Si$ fails).

    Otherwise, if the verifier outputs $\T$, the chain $\mathrm{CC}'$ is valid.
    Denote by $l'$ the string $\big(H_{n}' || h(S_{n-1}') || h(S_{n}')\big)$ signed in the \BCS of $c'_n$ (the last certificate of $\mathrm{CC}'$) by the private key corresponding to $\PK^*$.
    We have two cases:
    \begin{itemize}
    \item  $l'$ is equal to a message for which $\Si$ requested a signature.\\
    Because of its length, $l'$ may be equal to $l_0^*:=\big(H_{n-2}^* || h(S_{n-2}^*) || h(S_{n-1}^*)\big)$ or $l_1^*:=\big(H_{n-1}^* || h(S_{n-2}^*) || h(S_{n-1}^*)\big)$, but not to $h(S_{n-1}^*)$.
    In either case, $l' = l_0^*$ or  $l' = l_1^*$, the equality implies that $h(S_{n}') = h(S_{n-1}^*)$, but the specification of the BIX protocols supposes that different certificates have a different BIX ID in the \Sub
    (and we know that $\mathrm{CC}'$ is valid).
    So  $S'_{n-1}=S^*_{n-1} \neq S'_n$, because of the BIX ID's, but they have the same hash so $\Si$ may submit $(S^*_{n-1}, S'_n)$ as a solution to the Collision Problem.

\item $l'$ is different from all messages for which $\Si$ requested a signature.\\
    In the \BCS of $c'_n$ there is a signature $s$ of $l'$ such that $\Ver(l', s, \PK^*) = \T$ (remember that $\PK^*$ is the public key of the \Iss of $c'_n$ and that $\mathrm{CC}'$ is considered valid, so the signatures check out), so $\Si$ may submit $(l', s)$ as a winning answer of the challenge phase of the DSS game.
\end{itemize}

    So if $\Si$ does not fail it correctly solves the Collision Problem or wins the DSS game, and since $\A$ is a polynomial-time algorithm, $\Si$ is a polynomial-time algorithm too, given that the other operations performed correspond to the building of a certificate chain and this must be efficient.
    $\Si$ might fail only if the chain given by $\A$ is not valid (i.e. if $\A$ fails).
    Since the simulation of the SCL game is always correct, $\A$'s failure happens with probability $1 - \epsilon$, then the probability that $\Si$ wins is $1-(1-\epsilon)=\epsilon$.
  \end{proof}

  \begin{corollary}[SCL Security]\label{sclcor}
    If the Digital Signature Scheme is secure (see Assumption \ref{digsigsec}) and the hash function is collision resistant for the class $L$ (Assumption \ref{collres})  where $L$ is the set of all possible \Head fields, then the BIX protocol is secure against the Static Chain Lengthening.
  \end{corollary}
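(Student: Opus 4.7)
The plan is to derive the corollary from Theorem~\ref{sclthm} by a straightforward contrapositive argument, so essentially no new cryptographic content is needed beyond a dichotomy on which of the two reductions succeeds. I would begin by assuming, for contradiction, that the BIX protocol is \emph{not} secure against Static Chain Lengthening; by Definition of SCL-security this yields a polynomial-time adversary $\A$ that wins the SCL game with non-negligible probability $\epsilon$. Plugging $\A$ into the construction of Theorem~\ref{sclthm} produces a simulator $\Si$ which (i) runs in polynomial time, since it makes a single black-box call to $\A$ and performs only the polynomial-time bookkeeping required to set up $\mathrm{CC}^*$ and to forward at most three messages to the DSS challenger, and (ii) succeeds with probability at least $\epsilon$ in outputting either a collision for $h$ on the relevant class of inputs or a valid existential forgery for $\dss$.

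The main step is then a standard dichotomy: let $p_{\mathrm{coll}}$ be the probability that $\Si$ outputs a pair $(S_{n-1}^*, S'_n)$ with $S_{n-1}^*\neq S'_n$ and $h(S_{n-1}^*)=h(S'_n)$, and let $p_{\mathrm{sig}}$ be the probability that $\Si$ outputs a message-signature pair $(l',s)$ with $\Ver(l',s,\PK^*)=\T$ on a fresh $l'$. Theorem~\ref{sclthm} gives $p_{\mathrm{coll}}+p_{\mathrm{sig}}\ge \epsilon$, so at least one of the two is at least $\epsilon/2$, hence non-negligible. In the first case $\Si$ contradicts Definition~\ref{collres}; in the second case $\Si$ contradicts Definition~\ref{digsigsec}. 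Either way we get a contradiction with the hypotheses of the corollary, which completes the proof.

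The only real point of care, and what I would flag as the main obstacle, is a consistency check on the class $L$. Theorem~\ref{sclthm} exhibits a collision of the form $h(S_{n-1}^*)=h(S'_n)$, where both inputs are \Sub fields; the corollary, however, names $L$ as the set of all possible \Head fields. To apply the reduction cleanly, one needs the collision-resistance assumption on the class of \Sub fields (as used in the theorem), so in the writeup I would either treat the mention of \Head in the corollary as a typographical slip and assume collision resistance on the \Sub class, or else explicitly enlarge $L$ so that it contains all substrings that may appear as arguments of $h$ inside a certificate. With that clarification, the argument reduces to the two-line dichotomy above, and nothing more delicate is needed.
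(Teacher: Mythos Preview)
Your proposal is correct and follows essentially the same approach as the paper: invoke Theorem~\ref{sclthm}, use the subadditivity $P(C)+P(D)\ge\epsilon$ (the paper writes $\epsilon=P(C\vee D)\le P(C)+P(D)$, you write $p_{\mathrm{coll}}+p_{\mathrm{sig}}\ge\epsilon$ and halve), and conclude that one of the two assumptions is violated. Your flag on the class $L$ is also apt: the paper's theorem uses \Sub fields while the corollary's statement says \Head, so this is indeed a typographical slip in the paper rather than a gap in your argument.
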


  \begin{proof}
    Thanks to Theorem \ref{sclthm}, given a polynomial-time adversary that wins the SCL game with non-negligible probability $\epsilon$, a polynomial-time simulator might be built that with the same probability either solves the Collision Problem \ref{collpb} or wins the Digital Signature Security game \ref{digsiggame}.
    So let $C$ be the event "solution of the Collision Problem" and $D$ be the event "victory at the DSS game".
    We have that 
    \begin{equation}
        \epsilon = P(C \vee D) \leq P(C) + P(D)
    \end{equation}
    Note  that the sum of two negligible quantities is itself negligible, so the fact that $\epsilon$ is non-negligible implies that at least one of $P(C)$ and $P(D)$ is non-negligible, and this means that Assumption \ref{collres} or Assumption \ref{digsigsec} is broken.
  \end{proof}
\ \\

  Let us call Alice the user of the second-to-last certificate in the chain and Bob the user of the last certificate. We observe that the infeasibility of this attack would guarantee also the non-repudiation property of the last certificate in the chain.
  That is, if Alice tries to repudiate Bob (with an eye to issuing another certificate), then Bob might claim his righteous place since no one can attach its certificate to the tail of the certificate chain without being a proper user.

  Note also that even though the security of the digital signature is presented in the form of \emph{existential forgery}, in the proof the freedom of the attacker in the choice of the message to be signed is limited.
  In fact it has to forge a signature of $l':=\big(H_{n}' || h(S_{n-1}') || h(S_{n}')\big)$,
  where $h(S_{n-1}')$ is given by $\Si$, and even $H_{n}'$ is not completely controlled by $\A$ (the sequence number is given, and the other fields should be given by the IM).
  So a large part of the string to be signed is beyond the control of the forger, so the challenge is not completely an existential forgery but something in between an existential and a universal forgery.

\section{Certificate Tampering}

\label{cert-tamp}
  In the second attack scenario that we consider, we suppose that only the root certificate of a certificate chain is publicly available and trustworthy, while a malicious attacker tries to corrupt a chain of certificates built upon this root, resulting in another chain that may re-distribute as a proper chain with same root but with altered information.
  Note that the security against this attack would guarantee that no external attacker can modify any certificate in the chain, including deleting or inserting a certificate in any non-ending point, as long as the root certificate is safe (no unauthorized usage), secure (cannot be broken) and public (anyone can check it).
  If the security proved in the previous section is also considered, then a certificate chain is also secure at the end point (no one can wrongfully insert herself at the end or disavow the last certificate) achieving full security from external attacks.

  \begin{definition}[Static Tampering with Subject (STS) Game]\label{stsgame}
    In this game an adversary $\A$ aims to modify information contained in the \emph{Subject} field of a certificate $c_i$ contained in a certificate chain $\mathrm{CC}$, with $1\leq i\leq n-2$, $n=\lambda(\mathrm{CC})$.
    It proceeds as follows:
    \begin{itemize}
        \item The challenger $\C$ builds a certificate chain $\mathrm{CC}$ with root certificate $c_0$, according to the BIX protocol and using a hash function $h$ and a digital signature scheme $\dss$. Let $n=\lambda(\mathrm{CC})$.
        \item $\C$ passes to $\A$ the complete chain $\mathrm{CC}$ together with $h$ and $\dss$.
        \item $\C$ builds an  honest verifier $\V$ that, given a certificate $c^*$ and a certificate chain $\mathrm{CC}^*$ outputs $\T$ if the root certificate of $\mathrm{CC}^*$ is $c_0$ and $c^*$ is a valid certificate of $\mathrm{CC}^*$, $\F$ otherwise.
        \item $\A$ selects a "target certificate" $c_i$ in $\mathrm{CC}$ and tries to build a forged certificate chain $\mathrm{CC}'$ such that:
        \begin{itemize}
          \item $\mathrm{CC}'$ has the same length as $\mathrm{CC}$ truncated after $c_i$, i.e. the last certificate $c'_i$ of $\mathrm{CC}'$ is in the same position $i$ as $c_i$ in $\mathrm{CC}$ relatively to $c_0$,
          with $\lambda(\mathrm{CC}')=i+1$.
          \item The \Sub fields of $c_i$ and $c_i'$ are different, that is, $S_i \neq S_i'$.
          \item $\V(c', \mathrm{CC}') = \T$
        \end{itemize}
    \end{itemize}
    $\A$ wins the \emph{STS} game if he achieves the last three items, i.e. he successfully builds such a $\mathrm{CC}'$.
  \end{definition}

  \begin{definition}[Security against STS]
    The BIX protocol is said \emph{secure against Static Tampering with Subject} if there is no adversary $\A$ that in polynomial time wins the STS game \ref{stsgame} with non-negligible probability.
  \end{definition}

  \begin{theorem}\label{ststhm}
    Let $\A$ be an adversary that wins the STS game \ref{stsgame} with probability $\epsilon$, then a simulator $\Si$ might be built that with probability at least $\frac{\epsilon}{n-1}$ either solves the collision problem \ref{collpb}, where $L$ is the set of all possible \Sub fields, or wins the digital signature security game \ref{digsiggame}, where $n$ is the length of the certificate chain that $\Si$ gives to $\A$.
  \end{theorem}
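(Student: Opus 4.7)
The plan is to extend the reduction of Theorem~\ref{sclthm} to the tampering scenario, the new difficulty being that $\A$ chooses the target position $i \in \{1,\ldots,n-2\}$ adaptively after seeing the whole chain. Hence the simulator $\Si$ cannot decide in advance where to embed the $\dss$ challenge public key and has to guess; this is what produces the factor $\tfrac{1}{n-1}$ in the statement.

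After receiving a Collision Problem instance $(h,L)$ and a public key $\PK^*$ from the $\dss$ challenger, $\Si$ samples an index $j$ uniformly at random in $\{1,\ldots,n-2\}$, so that the event $j=i$ has probability at least $\tfrac{1}{n-1}$. Then $\Si$ builds a BIX certificate chain $\mathrm{CC}^*$ of length $n$ in the spirit of Theorem~\ref{sclthm}: it generates every key pair $(\SK_k,\PK_k)$ on its own except for $k=j-1$, where it plants $\PK^*$ as the public key of the issuer $S_{j-1}$. All signatures that the honest construction requires from the unknown secret key behind $\PK^*$ (the \Sub signature of $c_{j-1}$, the \Iss signature of $c_j$, the \BCS contributions of $c_{j-1}$ and $c_j$ by $\SK_{j-1}$, and the \FCS contributions of $c_{j-2}$ and $c_{j-1}$ involving $\SK_{j-1}$) are obtained by queries to the $\dss$ challenger; every remaining signature is produced honestly with the keys that $\Si$ knows. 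The completed chain is handed to $\A$.

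When $\A$ returns a forged chain $\mathrm{CC}'$ with $S_i'\neq S_i$, $\Si$ aborts unless $i=j$. In the lucky case, validity of $\mathrm{CC}'$ forces the \BCS of the last certificate $c_j'$ to contain a signature $s$ under $\PK^*$ on the string $l' := \bigl(H_j || h(S_{j-1}) || h(S_j')\bigr)$. Because $l'$ has the same format and length as the queried \BCS/\FCS messages, inspection of their \Head components rules out equality with all of them except the original \BCS message of $c_j$, namely $\bigl(H_j || h(S_{j-1}) || h(S_j)\bigr)$; if $l'$ coincides with this message then $h(S_j)=h(S_j')$ while $S_j\neq S_j'$ (this being the very condition of a successful STS attack), so $\Si$ outputs $(S_j,S_j')$ as a solution to the Collision Problem, whereas if $l'$ is fresh then $(l',s)$ is a winning $\dss$ forgery. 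Since the guess is independent of $\A$'s view, the overall success probability is at least $\epsilon/(n-1)$, and $\Si$ runs in polynomial time because so do both $\A$ and the chain construction.

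The main obstacle is dealing with attack strategies in which $\A$ tampers not only with $c_i$ but also with some earlier certificate $c_k$, with $k<i$: in that case the critical forgery under $\SK_{k-1}$ appears at the \BCS of $c_k'$ rather than of $c_i'$, so $\Si$ must really guess the \emph{earliest} modified position rather than $i$ itself. Since this position still lies in $\{1,\ldots,n-2\}$, the same guessing argument applies with the same loss factor, and the union bound between the collision and forgery events then concludes the proof exactly as in Corollary~\ref{sclcor}.
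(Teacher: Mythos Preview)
Your proposal is correct and follows essentially the same reduction as the paper: embed the $\dss$ challenge key $\PK^*$ as the public key of $S_{k-1}$ for a uniformly guessed index $k$, obtain via signing queries all honest signatures that the BIX protocol requires from $\SK_{k-1}$, and then extract either a hash collision or a fresh forgery from the \BCS of the first certificate whose \Sub has changed. The only cosmetic difference is organisational: the paper does not phrase the guess as ``$j=i$'' and then revise it, but directly performs a case analysis on the smallest index $k$ with $h(S_k)\neq h(S_k')$, which is exactly what your final paragraph converges to.
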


  \begin{proof}
    Let $\dss$ be the digital signature scheme and $h$ the hash function used in the BIX protocol, and $L \subseteq (\Fb)^l$ be the class of all possible \Sub fields.
    We will build a simulator $\Si$ that simultaneously plays the digital signature security (DSS) game \ref{digsiggame} and tries to solve an instance of the collision problem \ref{collpb} for $L$.
    It does so by simulating an instance of the STS game \ref{stsgame} and exploiting $\A$.
    We will prove that when $\A$ wins the STS game, one in $n-1$ times $\Si$ is successful.
    To be more precise, if $\Si$ does not find a solution for the collision problem then $\Si$ wins the DSS game.

    $\Si$ starts with taking as input an instance $(h, L)$ of \ref{collpb} and a public key $\PK^*$ as the output of the first phase of \ref{digsiggame} for the scheme $\dss$.\\
    $\Si$ now proceeds to build a certificate chain $\mathrm{CC}$ following the BIX protocol, as follows.
    First, $\Si$ chooses $n \geq 2$ (possibly depending on the $\A$'s requirements). Then $\Si$ selects $2 \leq k \leq n-2 n$ at random to be the index of a certificate $c_{k}$ in $\mathrm{CC}$.
    $\Si$ builds the first $k-1$ certificates normally, running the $\Key$ algorithm of the $\dss$ scheme to choose public keys for the \Sub fields, so the corresponding secret keys are available (to $\Si$) to sign these certificates properly. So, $c_0,\ldots,c_{k-3}$ are complete certificate and $c_{k-2}$ is a tail certificate.
    Then it sets the \Sub of $c_{k-1}$ such that its public key is $\PK^*$, and a header $H_{k-1}$, then it queries the challenger of the DSS game to obtain three valid signatures, respectively, on:
    \begin{itemize}
        \item the hash $h(S_{k-1})$ of this subject,
        \item on $\left(H_{k-1} || h(S_{k^-2}) || h(S_{k-1})\right)$ for the \BCS of $c_{k-1}$,
        \item on $\left(H_{k-2} || h(S_{k-2}) || h(S_{k-1})\right)$ for the \FCS of~$c_{k-2}$,
    \end{itemize}
    where we recall that $H_{k-2}$ is the \Head of $c_{k- 2}$, $H_{k-1}$ is the \Head of $c_{k- 1}$, $h(S_{k-2})$ is the hash of the \Iss of $c_{k- 1}$.
    Then $\Si$ builds the $k$-th certificate, choosing a $H_k$ and $S_k$, using again the $\Key$ algorithm to sign $S_k$, querying the DSS challenger for two valid signatures, respectively, on:
    \begin{itemize}
        \item $\left(H_{k} || h(S_{k-1}) || h(S_{k})\right)$ for the \BCS of $c_{k}$,
        \item and on $\left(H_{k-1} || h(S_{k-1}) || h(S_{k})\right)$ for the \FCS of $c_{k-1}$,
    \end{itemize}
    where we recall that $H_{k}$ is the \Head of $c_{k}$ and $h(S_{k})$ is the hash of the \Sub of $c_{k}$.
    Finally, $\Si$ completes the chain $\mathrm{CC}$ (following the protocol and choosing everything, including the $\SK_i$'s), so that it has $n$ certificates, and passes it to~$\A$.

    $\A$ responds with a counterfeit chain $\mathrm{CC}'$ of length $i+1 \leq n$.
    $\A$ fails if and only if $\mathrm{CC}'$ is not valid, which happens when the last \Sub is not altered ($S_{i}'= S_{i}$) or when the integrity check of the verifier fails. If we are in this situation, $\Si$  discards $\mathrm{CC}'$ and gives up ($\Si$ fails).\\
    Otherwise, for $1\leq j \leq i$, $\Si$ controls the forged certificates $c_j' \in \mathrm{CC}'$ and compares them to $c_j \in \mathrm{CC}$. There are three cases to consider:
    \begin{itemize}
    \item for all $1\leq j \leq i$, $h(S_j)=h(S_j')$.\\
          Since $S_i'\ne S_i$, $\Si$ outputs the pair $(S_i, S_i')$ as a solution to the collision problem.
    \item there is $1\leq j \leq i$ such that $h(S_j) \neq h(S_j')$, but $h(S_k) = h(S_k')$ or $k>i$. \\
    In this case $\Si$ gives up ($\Si$ fails).
    \item $h(S_{k-1})=h(S'_{k-1})$ and $h(S_k) \ne h(S_k')$.\\
    If $S_{k-1} \ne S_{k-1}'$, the $\Si$ wins by submitting the pair $(S_{k-1}, S_{k-1}')$ as a solution to the collision problem. Otherwise, $S_{k-1} = S_{k-1}'$ and $PK^*$ is the public key of the issuer of $c_k'$.
    Then in the \BCS of the certificate $c_k'$ there is the digital signature $s$ for which holds the relation
     $\Ver\left(\left(H_{k}' || h(S_{k-1}') || h(S_{k}')\right), s, \PK^* \right) = \T$ (remember that $\mathrm{CC}'$ is considered valid, so the signatures check out), so $\Si$ may submit 
    $$
       \left(  \big( H_{k}' || h(S_{k-1}') || h(S_{k}') \big), s \right)
    $$
    as a winning answer of the challenge phase of the DSS game, since it is different from the messages $\Si$ queried for signatures,
    that are $$
    \left[\; h(S_{k-1}),\; \left( H_{k-1} || h(S_{k-2}) || h(S_{k-1}) \right) \right.,\; \left( H_{k-2} || h(S_{k-2}) || h(S_{k-1}) \right),$$
    $$ \left( H_{k} || h(S_{k-1}) || h(S_{k}) \right),\; \left. \left( H_{k-1} || h(S_{k-1}) || h(S_{k}) \right) \; \right]\,.
    $$ 
    \item $h(S_{k-1})\ne h(S'_{k-1})$ and $h(S_k) \ne h(S_k')$.\\
    $\Si$ gives up and fails.
\end{itemize}
    So if $\Si$ does not fail, $\Si$ correctly solves the collision problem or wins the DSS game, and since $\A$ is a polynomial-time algorithm, $\Si$ is a polynomial-time algorithm too.\\
    $\Si$ wins at least in the event when $\A$ wins and $S_k\ne S_k'$ (knowing that at least
    one $j$ exists such that $1\leq j\leq i\leq n-1$ $S_j \ne S_j'$). The probability of this
    event is at least the probability of the two cases and so it is
    $$
       \epsilon \cdot \frac{1}{n-1} \; = \; \frac{\epsilon}{n-1} \,.
    $$
  \end{proof}

  \begin{corollary}[STS Security]\label{stscor}
    If the Digital Signature Scheme is secure (see Assumption \ref{digsigsec}) and the hash function is collision resistant for the class $L$ (Assumption \ref{collres})  where $L$ is the set of all possible \Sub fields, then BIX protocol is secure against the static tampering with subject.
  \end{corollary}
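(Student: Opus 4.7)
The plan is to mirror the argument used in Corollary \ref{sclcor}, leveraging Theorem \ref{ststhm} together with a union bound. Suppose for contradiction that the BIX protocol is not secure against Static Tampering with Subject, so there exists a polynomial-time adversary $\A$ that wins the STS game with non-negligible probability $\epsilon$. By Theorem \ref{ststhm}, one can then construct a polynomial-time simulator $\Si$ that, with probability at least $\epsilon/(n-1)$, either solves the Collision Problem \ref{collpb} for the class $L$ of all possible \Sub fields, or wins the Digital Signature Security game \ref{digsiggame}.

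First I would observe that the chain length $n$ is under $\Si$'s control and must be polynomial in the security parameter, since $\Si$ itself constructs all $n$ certificates in polynomial time. Therefore the reduction loss $1/(n-1)$ is polynomial, and $\epsilon/(n-1)$ remains non-negligible whenever $\epsilon$ is.

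Next, letting $C$ denote the event ``$\Si$ solves the Collision Problem'' and $D$ the event ``$\Si$ wins the DSS game'', a union bound yields
\begin{equation}
\frac{\epsilon}{n-1} \;\leq\; P(C \vee D) \;\leq\; P(C) + P(D)\,.
\end{equation}
Since the sum of two negligible quantities is itself negligible, the non-negligibility of the left-hand side forces at least one of $P(C)$ and $P(D)$ to be non-negligible. This contradicts either Assumption \ref{collres} (collision resistance of $h$ on $L$) or Assumption \ref{digsigsec} (security of $\dss$), yielding the conclusion.

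The only genuinely new point, compared with Corollary \ref{sclcor}, is handling the factor $1/(n-1)$ coming from the fact that $\Si$ in Theorem \ref{ststhm} must guess the index $k$ at which $\A$'s forgery will ``first deviate''. I expect this to be the only potential obstacle, but it is benign: as long as the BIX protocol is instantiated with chains of polynomial length, polynomial reduction loss preserves non-negligibility, and the rest of the argument is a routine union-bound wrap-up of Theorem \ref{ststhm}.
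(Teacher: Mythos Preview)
Your proposal is correct and follows essentially the same approach as the paper: invoke Theorem~\ref{ststhm}, note that the chain length $n$ is polynomial so that $\epsilon/(n-1)$ remains non-negligible, and conclude that one of the two assumptions is broken. Your write-up is in fact a bit more explicit than the paper's, spelling out the union bound exactly as in Corollary~\ref{sclcor}, whereas the paper just asserts that $\epsilon/n$ is non-negligible and stops there.
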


  \begin{proof}
    For the BIX protocol to be functional the length of the chain must be polynomial, so for the result of Theorem~\ref{ststhm}, given a polynomial-time adversary that wins the STS game with non-negligible probability $\epsilon$, a polynomial-time simulator might be built that with probability at least $\frac{\epsilon}{n}$ either solves the collision problem \ref{collpb} or wins the digital signature security game \ref{digsiggame}, where $n$ is the length of the chain.
    But $\frac{\epsilon}{n}$ is non-negligible too, so this breaks either Assumption \ref{collres} or Assumption \ref{digsigsec}.
  \end{proof}

\section{Mid-Chain Altering}
\label{mid-alt}

Our proofs of security in the previous two sections does not show the impregnability of BIX to all
protocol attacks. In this section we do present an effective attack, where two non-consecutive members of the BIX community (i.e. whose BIX certificates are not next to each other in the chain) collude to create an alternate version of the chain between the two that is  considered valid by the members outside of that section, but that contains  subjects chosen arbitrarily by the two malicious users.

  Let $S_i, S_j$ be two malicious colluding users, where the indexes $i, j$ of their certificates in the certificate chain are  such that $j > i + 1 > i > 0$.
  Suppose that the chain is built properly up to the $j$-th certificate.
  We claim that, once $S_j$ has received his certificate $c_j$, he may collude with $S_i$ in order to change the information in the certificates $c_k$ with $i< k \leq j$ in such a way that every user $S_m$ with a certificate with index $0 < m < i$ or $m > j$ will consider correct the altered certificates (if they have not already obtained the original certificates).

  The first thing they do is to change the information in the  \Sub fields $S_j$ ($i+1 \leq j\leq j-1$) by generating  private keys and the corresponding  public ones (and then they are able to sign everything).
  Then, the first malicious user $S_i$ changes her certificate $c_i$ so that the fields \Next and \FCS link to the altered information and validate it, and similarly does $S_j$ with his fields \Iss and \BCS of $c_j$.

  At this point this altered version of the chain is considered valid by unsuspecting users. Moreover, $S_j$ as last user is responsible to supply the certificates in the chain to new users, so he may propagate the altered version, while older users $S_m$ with $0 < m < i$ will unwillingly authenticate altered certificates. Indeed,  when checking the integrity by traversing the chain either forward or backward, they find no inconsistency as long as $S_i$ points to the altered version.

\section{Conclusions}
\label{concl}
In this paper the BIX certificates protocol proposed in \cite{BIX} has been formally analyzed from a security point of view.
In particular the security against static attacks that aim to corrupt a chain has been proven, reducing the security to the choice of an adequate hash function and digital signature scheme.
For this reason the security of ECDSA, the main DSS nowadays, has also been discussed.\\
Moreover, an attack has been proposed on the BIX protocol in which two malicious users can collude to forge a portion of the certificate chain, suggesting that although the protocols has some solid security features, it still needs improvements to resist to more active attacks.

\bibliographystyle{plain}
\bibliography{biblio.bib%
}

\end{document}